\title{Fault Localization in Web Applications via Model Finding}
\author{%
Sylvain Hallé, Oussama Beroual\institute{%
Laboratoire d'informatique formelle\\
Université du Québec à Chicoutimi, Canada\\
}
}
\newsavebox{\verbbox}
\newcommand{\Nu}{N}
\newtheorem{thm}{Theorem}
\newcolumntype{x}[1]{%
>{\raggedleft\hspace{0pt}}p{#1}}%
\newcommand{\appname}{Cornipickle}
\newcommand{\nothing}[1]{#1}
\newenvironment{example}{}{}
\newenvironment{proof}{}{}
\begin{document}

\maketitle
\begin{abstract}
We describe a generic technique for fault localization independent from the nature of the object or the specification language used to declare its expected properties. This technique is based on the concept of ``repair'', a minimal set of transformations which, when applied to the original object, restores its satisfiability with respect to the specification. We show how this technique can be applied with various specification languages, including propositional and finite first-order logic. In particular, we focus on its use in the detection of layout faults in web applications. 

\end{abstract}

\section{Introduction} 

The use of the web has seen significant changes since the inception of HTML in the early 1990s. The web has now become a large and mature software ecosystem on par with the complexity found in traditional desktop applications. However, due to the somewhat complex relationship between HTML, CSS and Ja\-va\-Script, the layout of web applications tends to be harder to properly specify in contrast with traditional desktop applications. The same document can be shown in a variety of sizes, resolutions, browsers and even devices, making the presence of so-called layout ``bugs'' all the more prevalent. Such problems can range from relatively mundane quirks like overlapping or incorrectly aligned elements, to more serious issues compromising the functionality of the user interface.

In past work, we developed an automated tool for the detection of such bugs, evaluating expressions in a high-level declarative language based on first-order and linear temporal logic \cite{DBLP:conf/icst/HalleBGB15}. However, it was recognized early on that the basic evaluation of such properties, returning a simple true/false verdict, would not prove very helpful to a designer: web pages are composed of hundreds of elements with dozens of properties each, and over which many assertions are required to hold. What is more, sometimes the layout faults are too subtle to be visible to the naked eye (such as elements off by a single pixel). To provide real value to practitioners, a layout analysis tool should hence be able to pinpoint specific elements of the page that are responsible for some bug. Our work on layout bug detection has therefore turned into a form of \emph{fault localization}.

In Section \ref{sec:layout}, we first show examples of layout bugs in real-world web applications, and describe an early attempt at building an explanation for the violation of a specification in that context. This process builds what is called a ``witness''; its construction is based on a function applied recursively on the \emph{formula} that is falsified. A witness highlights a set of elements in the page which, in some way, are related to the violation of a property. 

This has proved insufficient in practice; therefore, in Section \ref{sec:repairs}, we describe ongoing work on a new formal grounding, this time based on the concept of ``repair''. Intuitively, a repair is a minimal set of transformations which, when applied to the original object, restores its satisfiability with respect to the specification. The advantage of this concept is that it is, at its highest level, independent from the nature of the object and the specification language used to declare its expected properties. It could hence be applied to a variety of other scenarios, besides web applications. Section \ref{sec:discussion} further discusses the concept of repairs, and provides a basic algorithm for computing them. 
Based on a few examples, we show how it presents the potential of improving the task of fault detection and correction, by automatically providing hints that correspond to intuition.



\section{Fault Localization in Web Applications}\label{sec:layout} 

A \emph{layout-based bug} is a defect in a web system that has visible effects on the content of the pages served to the user. This content can be anything observable by the client, including the structure of the page's elements and the dimensions and style attributes of these elements. In past work, we studied bugs in web applications that can be detected by analyzing the contents and layout of page elements inside a browser's window. Based on an empirical analysis of 35 real-world web sites and applications (such as Facebook, Dropbox, and Moodle), we provided a survey and classification of more than 90 instances of layout-based bugs.

\subsection{Examples of Layout Bugs}

Figure \ref{fig:layout-examples} shows two examples of frequent layout bugs we encountered. In the first screenshot, two elements of the page that should be disjoint are actually overlapping. In that particular instance, the problem is caused by the fact that elements are absolutely positioned in a page with respect to their dimensions when the text they contain is in English. When displaying the web site in another language (such as French), it may occur that the corresponding text is longer than the English version, causing two elements that were disjoint to suddenly overlap.

\begin{figure}
\centering
\subfloat[]{\frame{\includegraphics[width=2.5in]{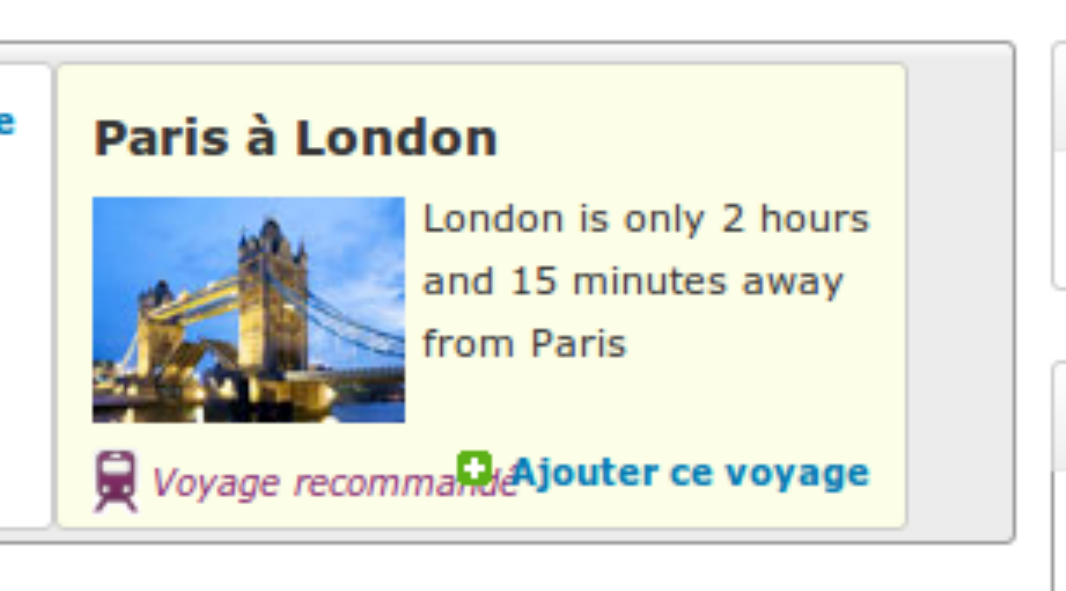}}}~~~
\subfloat[]{\frame{\includegraphics[width=2in]{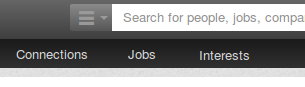}}}
\caption{Two examples of real-world layout bugs in web applications: (a) overlapping elements; (b) incorrectly aligned elements.}
\label{fig:layout-examples}
\end{figure}

Another mundane but frequent layout problem is the presence of elements which visibly should be aligned but are not. Figure \ref{fig:layout-examples}b shows an instance of this bug for the web platform LinkedIn. Sometimes, the misalignment is subtle, as an element may be off by a single pixel, as is the case here.

We then developed \appname{}, an automated testing tool that provides a declarative language to express desirable properties of a web application as a set of human-readable assertions on the page's HTML and CSS data \cite{DBLP:conf/icst/HalleBGB15}. Such properties can be verified on-the-fly as a user interacts with an application.

The core of the Cornipickle language is a high-level, English-like grammar that translates in the background into first-order linear temporal logic. For example, one can express that all elements of some list with ID ``menu'' should be vertically aligned by writing the following expression:

\begin{center}
\begin{minipage}{2in}
\begin{verbatim}
For each $x in $(#menu li) (
  For each $y in $(#menu li) (
    $x's left equals $y's left
)).
\end{verbatim}
\end{minipage}
\end{center}

\noindent
Here \verb+$(#menu li)+ is a called CSS \emph{selector}. A selector is a filter expression designating a set of elements in a web page, as defined in the CSS specification \cite{css-ref}. For example, the expression \verb+p > li.foo+ designates all elements with name \verb+li+ and class attribute \verb+foo+ immediately nested within an element of name \verb+p+. The syntax \verb+$x's left+ is used to refer to the CSS \verb+left+ property of an element, which designates the absolute $x$ coordinate of its top-left corner within the page. For this particular example, Figure \ref{fig:list-wrong-a} shows a simple page for which the property would be violated.

\begin{figure}
\centering
\subfloat[]{\includegraphics[height=0.75in]{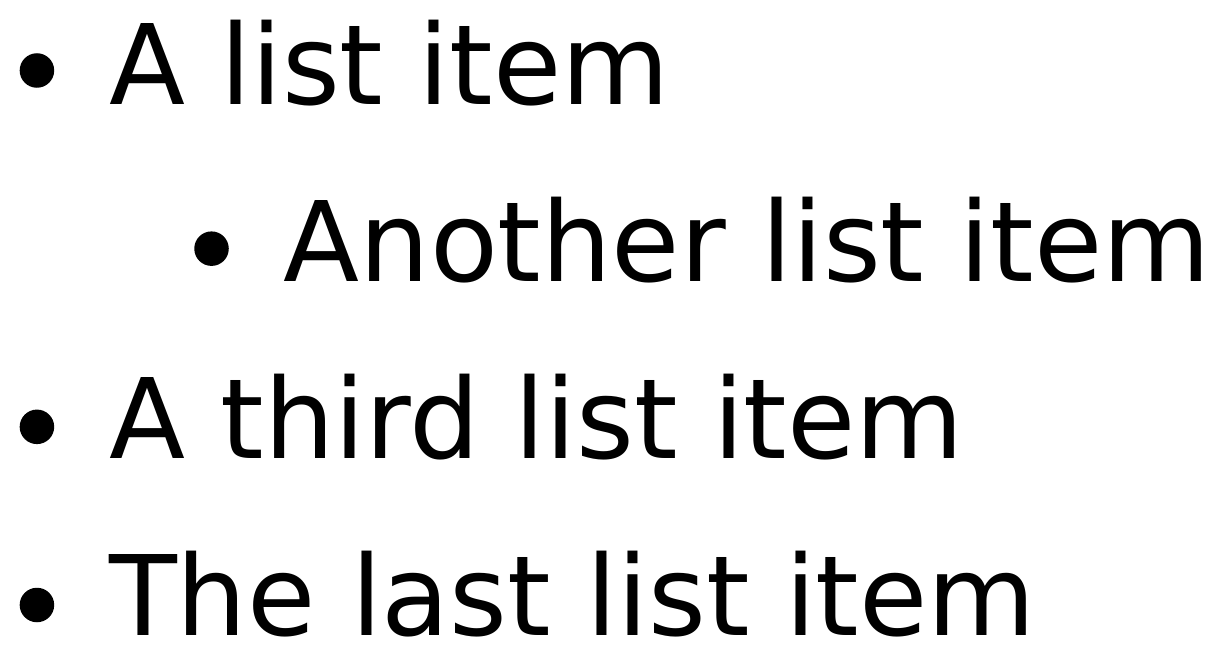}\label{fig:list-wrong-a}}~~~
\subfloat[]{\includegraphics[height=0.75in]{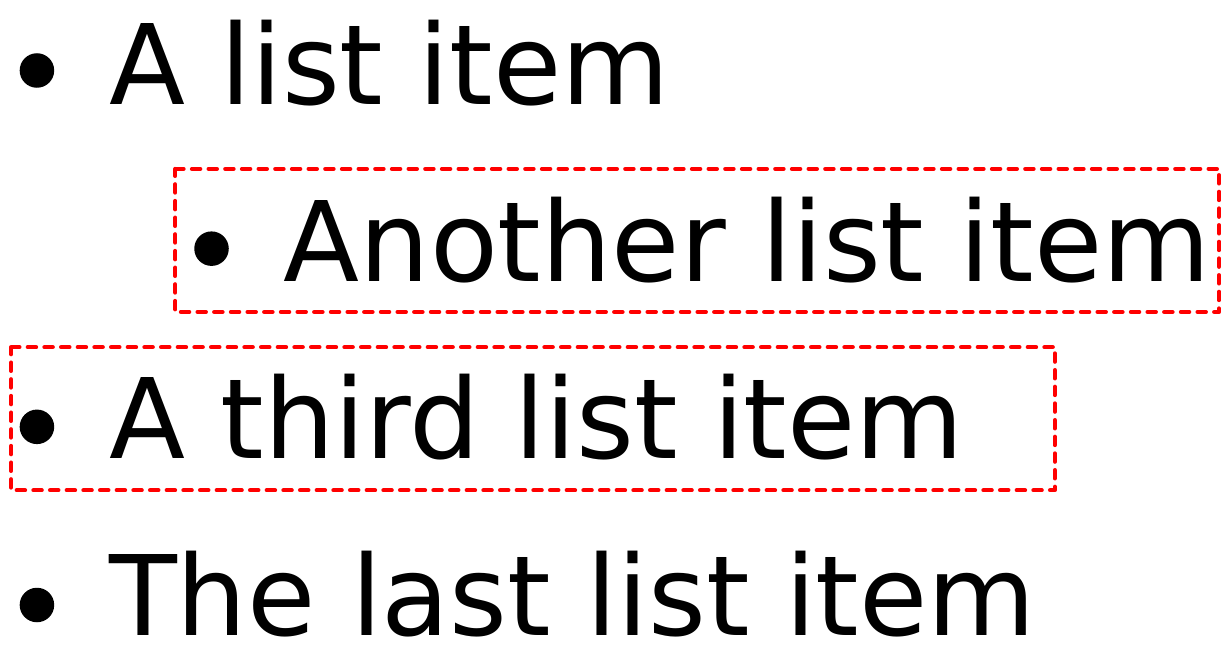}\label{fig:list-wrong-b}}
\caption{Example of a simple web layout fault: (a) one of the list items is incorrectly aligned with the others; (b) a witness produced by the \appname{} tool.}
\label{fig:list-wrong}
\end{figure}

\subsection{Witnesses}

\appname{} can easily evaluate properties of this kind on the contents of a page and report a true/false verdict. However, while in some cases the reason for the violation of a property is obvious, in many others a mere Boolean outcome bears little meaning for a web designer. After all, some errors can be subtle, or refer to a single element in a page that contains hundreds. Being simply told that ``something is wrong'' provides little added value when someone is required to hunt for the problem in such a complex page. 

To this end, \appname{} was fitted with a mechanism for attempting to circumscribe parts of a page that account for the discovered fault, in the form of what is called a \emph{witness}. A witness is  tree of DOM elements; let $W$ be the set of all witnesses. The set of \emph{verdicts} is defined as $V : \{\top,\bot,?\} \times W \times W$; a verdict is composed of a truth value and two witnesses: one corresponding to truth value $\top$, the other to truth value $\bot$.

The verdict conjunction is a function $\otimes : V \times \Nu \times V \rightarrow V$ defined as follows:
\[
\otimes(\langle b, w_\top, w_\bot\rangle, \nu, \langle b', w'_\top, w'_\bot\rangle) = \begin{cases}
\langle \bot, w_\top, w_\bot \cup \{(\nu, w'_\bot)\} \rangle & \mbox{if $b' = \bot$} \\
\langle ?, w_\top \cup \{(\nu, w'_\top)\}, w_\bot\rangle & \mbox{if $b \neq \bot$ and $b' = ?$} \\
\langle b, w_\top \cup \{(\nu, w'_\top)\}, w_\bot \rangle & \mbox{if $b \neq \bot$ and $b' = \top$} \\
\langle b, w_\top, w_\bot\rangle & \mbox{otherwise}
\end{cases}
\]
The notation $(\nu, w)$ designates the creation of a new witness whose root is the DOM node $\nu$, with witness $w$ as its child. The notation $w \cup w'$ designates the addition of $w'$ to the children of witness $w$. We will abuse notation and accept that the second argument of $\otimes$ be some ``empty'' element of $\Nu$ we will designate as $\nu_\emptyset$.

Verdict conjunction updates the contents of an existing verdict $v$, given another verdict $v'$ and some DOM element $\nu$. If $v'$ is false, it carries a witness of that falsehood, namely $w'_\bot$; this witness is attached as a child of a new tree whose root is $\nu$, and that tree is added to $v$'s witness of falsehood, $w_\bot$. Moreover, $v$'s truth value is set to $\bot$. In other words, $v'$'s explanation for being false is added to $v$'s explanation for being false. Otherwise, if neither $v$ nor $v'$ is false, then $v'$'s witness associated to $\top$ is added to $v$'s $\top$ witness, and its truth value is updated accordingly. In all other cases, $v$ is left unchanged. A dual definition can be built for verdict disjunction.

Using these operators, the formal semantics of the language can be lifted to a function $\omega : T^* \times \Phi \rightarrow V$, which, out of an expression $\varphi \in \Phi$ and a DOM tree $\nothing{t} \in T^*$, computes a verdict. The recursive semantics of that function is shown in Table \ref{tab:witness}. The details of that function are out of the scope of the present paper. However, we can see the result of applying $\omega$ on the DOM tree of Figure \ref{fig:list-wrong-a}. The function returns a tree containing pointers to two of the page's elements, highlighted in red in Figure \ref{fig:list-wrong-b}. (Actually, the function returns multiple sets, each of which contains the second list item and one of the remaining items.)

\begin{table}\small
\begin{tabular}{x{2in}cp{3in}}
  $\omega(\nothing{t}, \nu\mbox{\texttt{'s }}a\mbox{\texttt{ equals }} \nu'\mbox{\texttt{'s }}a' ) $ & = & 
  \begin{minipage}{1.8in}
    \noindent
    \begin{displaymath}
    \begin{cases}
      \langle \top, \{\nu, \nu'\}, \emptyset\rangle & \mbox{if $\nu(a) = \nu'(a')$} \\
      \langle \top, \emptyset, \{\nu, \nu'\}\rangle & \mbox{otherwise}
      \end{cases}
      \end{displaymath}
\end{minipage}
\\
%
%
$\omega(\nothing{t}, \nu\mbox{\texttt{'s }}a\mbox{\texttt{ equals }} v ) $ & = & 
\begin{minipage}{1.5in}
  \noindent
  \begin{displaymath}
    \begin{cases}
    \langle \top, \{\nu\}, \emptyset\rangle & \mbox{if $\nu(a) = v$} \\
    \langle \bot, \emptyset, \{\nu\}\rangle & \mbox{otherwise}
  \end{cases}
  \end{displaymath}
\end{minipage} \\
$\omega(\nothing{t}, \mbox{\texttt{Not}}\, \varphi ) $ & = &  $\ominus(\omega(\nothing{t}, \varphi), \nu_\emptyset) $ \\
$\omega(\nothing{t}, \varphi \mbox{\texttt{ And }}\, \psi ) $ & = &  $\otimes(\otimes(\langle\top, \emptyset, \emptyset{}\rangle, \nu_\emptyset, \omega(\nothing{t}, \varphi)), \nu_\emptyset, \omega(\nothing{t}, \psi)) $ \\
$\omega(\nothing{t}, \varphi \mbox{\texttt{ Or }}\, \psi ) $ & = &  $\oplus(\oplus(\langle\bot, \emptyset, \emptyset{}\rangle, \nu_\emptyset, \omega(\nothing{t}, \varphi)), \nu_\emptyset, \omega(\nothing{t}, \psi)) $ \\
$\omega(\nothing{t}, \mbox{\texttt{If }} \varphi \mbox{\texttt{ Then }} \psi ) $ & = &  $\oplus(\oplus(\langle\bot, \emptyset, \emptyset{}\rangle, \nu_\emptyset, \ominus(\omega(\nothing{t}, \varphi), \nu_\emptyset)), \nu_\emptyset, \omega(\nothing{t}, \psi)) $ \\
$\omega(\nothing{t}, \mbox{\texttt{There exists }} \xi \mbox{\texttt{ in}}$ \newline 
$\mbox{\texttt{\$(}}c\mbox{\texttt{) such that }} \varphi ) $ & = &  $\bigoplus_{\nu \in \chi(\nothing{t}_0, c)}^{\langle\bot, \emptyset, \emptyset{}\rangle} \omega(\nothing{t}, \varphi[\xi/\nu])$ \\
$\omega(\nothing{t}, \mbox{\texttt{For each }} \xi \mbox{\texttt{ in \$(}} c\mbox{\texttt{) }}\varphi ) $ & = &  $\bigotimes_{\nu \in \chi(\nothing{t}_0, c)}^{\langle\top, \emptyset, \emptyset{}\rangle} \omega(\nothing{t}, \varphi[\xi/\nu])$ \\
%
%
%
%
\end{tabular}

\caption{The recursive definition of the verdict computation function $\omega$}
\label{tab:witness}
\end{table}

Intuitively, such a result makes sense for a web designer; indeed, these two elements should be aligned, while they are not. However, this information can only be deduced through knowledge of the violated property; the witness simply points to these two elements, without providing information about ``what is wrong'' about them.


\section{A Generic Definition of Repairs}\label{sec:repairs} 

While the recursive counter-example generation present in the current version of \appname{} provides more information than a simple true/false verdict, in many cases it may still prove too vague to be useful.

In this section, we introduce the notion of \emph{repair}, which can be defined intuitively as a set of modifications required to some object to make it satisfy a property. The notion of repair can be seen as fault localization, expressed in reverse: stating how an object needs to be repaired indirectly points to aspects of its structure that are responsible for the fact that the property is not currently fulfilled.
We shall see that, contrarily to the concept of witness, which is heavily coupled with the specification language and domain objects used, repairs are defined at a level of abstraction that does not rely on properties of either.

\subsection{Definition}

Let $\Sigma$ be a set of \emph{structures}, and $T_\Sigma$ a set of endomorphisms on $\Sigma$; that is, each $\tau \in T_\Sigma$ is a function $\tau : \Sigma \rightarrow \Sigma$. Let $2^{T_\Sigma}$ designate the set of all subsets of $T_\Sigma$. A set of endomorphisms $T = \{\tau_1, \dots, \tau_n\} \in 2^{T_\Sigma}$ is said to be \emph{well defined} if any two elements $\tau_i$, $\tau_j$ are such that $\tau_i \circ \tau_j \equiv \tau_j \circ \tau_j$. Such a well defined set will be called a \emph{transformation}. When the context is clear, we shall abuse notation and consider $T$ as the (uniquely defined) endomorphism $\tau_1 \circ \dots \circ \tau_n$. Set inclusion induces a partial ordering over transformations.

Let $\Phi$ be a set of \emph{language expressions} equipped with a satisfaction relation $\models~:~\Sigma \times \Phi \rightarrow \{\top,\bot\}$. For an expression $\varphi \in \Phi$ and a structure $\sigma \in \Sigma$, we will write $\sigma \models \varphi$ if and only if $\models(\sigma,\varphi) = \top$. In such a case, we shall say that $\sigma$ ``satisfies'' $\varphi$, or alternately that $\sigma$ is a \emph{model} of $\varphi$.

Let $\sigma \in \Sigma$ be a structure such that $\sigma \not\models \varphi$ for some expression $\varphi \in \Phi$. A \emph{repair} is defined as a transformation $T \in 2^{T_\Sigma}$ such that $T(\sigma) \models \varphi$. A repair is said to be \emph{prime} if no subset $T' \subseteq T$ is such that $T'$ is also a repair. Intuitively, a prime repair is a set of ``changes'' to a structure $\sigma$ that make it satisfy $\varphi$, such that no ``smaller'' change also restores satisfiability. Since $\subseteq$ is a partial order, there may be multiple, mutually incomparable prime repairs.

Figure \ref{fig:lattice} illustrates this concept. The picture represents all transformations that can be applied to a structure, in the simple case where only four morphisms exist. The empty transformation is at the bottom, and each arrow in the graph represents the addition of one more morphism to an existing transformation. Red nodes indicate transformations that are not repairs, while yellow and green nodes indicate repairs. Of these, prime repairs are coloured in green; one can see that all antecedents of green nodes are red. The converse, however, is not true: not all descendants of a repair are repairs themselves.

\begin{figure}
\centering
\includegraphics[width=2in]{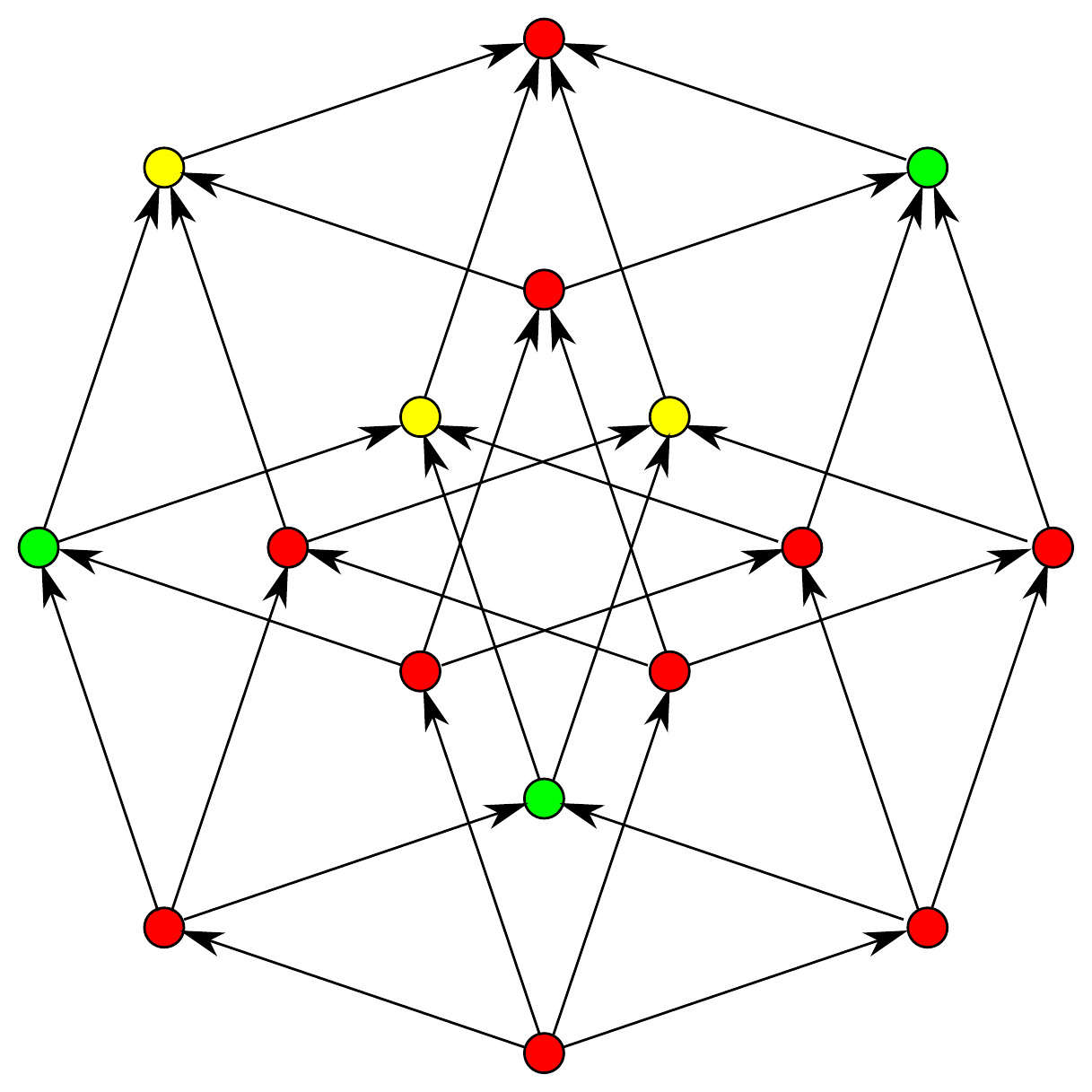}
\caption{Illustration of the concept of prime repair.}
\label{fig:lattice}
\end{figure}


\subsection{Examples}

This simple definition can then be applied to a variety of specification languages, as we shall illustrate through the examples that follow.

\subsubsection{Propositional Logic}

As a first example, let $\Phi$ be the set of propositional logic formul\ae{} with variables $X = \{x_1, \dots, x_n\}$ for some $n \geq 1$. Let $\Sigma$ be the set of functions $X \rightarrow \{\top,\bot\}$, which we shall call \emph{valuations}. The satisfaction relation $\models$ is defined as $\sigma \models \varphi = \top$ if and only if $\varphi$ evaluates to true when its variables are replaced by the corresponding truth value specified by $\sigma$, and $\bot$ otherwise.

Let $b \in \{\top,\bot\}$ and $i \in [1,n]$. We will note $\tau_{x_i \mapsto b}$ the endomorphism defined as:
\[
(\tau_{x_i \mapsto b}(\sigma))(x) = 
\begin{cases}
b & \mbox{if $x = x_i$} \\
\sigma(x) & \mbox{otherwise}
\end{cases}
\]
This morphism sets $x_i$ to $b$ and leaves the rest of the original valuation unchanged. The set of endomorphisms $T_\Sigma$ is then defined as:
\[
T_\Sigma = \bigcup_{i \in [1,n]} \bigcup_{b \in \{\top,\bot\}} {\tau_{x_i \mapsto b}}
\]

Two transformations $\tau_{x\mapsto b}$, $\tau'_{y\mapsto b'}$ commute if $x \neq y$. Hence a set of transformations $T \in 2^{T_\Sigma}$ is well defined if and only if every endomorphism it contains changes the value of a different variable.

\begin{example}
Let $X = \{a, b, c\}$, $\sigma$ be the valuation $\{a \mapsto \top, b \mapsto \bot, c \mapsto \bot\}$ and $\varphi$ the propositional formula $a \wedge b$. One can easily observe that $\sigma \not\models \varphi$. A repair is the transformation $T= \{\tau_{b \mapsto\top}\}$; that is, $T(\sigma) \models \varphi$	. This corresponds to the intuition that the explanation for the falsehood of $\varphi$ is that $b$ is false while it should be true. Note that although $T'=\{\tau_{b \mapsto\top}, \tau_{c \mapsto\top}\}$ would also make $\varphi$ true, it is not a \emph{prime} repair, since $T \subseteq T'$. This corresponds to the intuition that the truth value of $c$ is not relevant to the falsehood of $\varphi$.
\end{example}

\begin{example}\label{ex:prop2}
Let $\sigma$ be the valuation $\{a \mapsto \top,b \mapsto \bot,c \mapsto\bot\}$ and $\varphi$ the propositional formula $a \rightarrow b$. This time, two prime repairs exist: $T= \{\tau_{b \mapsto \top}\}$ and $T' = \{\tau_{a \mapsto\bot}\}$. It is possible to check that both fix the truth value of the original valuation. Informally, the first transformation accounts the falsehood of $\varphi$ on the fact that $a$ is true, while the other one rather explains it by the fact that $b$ is false ---which indeed corresponds to the intuition. Since both repairs are incomparable, none of these explanations is ``preferred''. We shall revisit this concept later.
\end{example}

\subsubsection{First-Order Logic}

The concept of repair can easily be lifted to the set $\Phi$ of first-order logic formul\ae{} on finite domains. Let $A$ be a set of elements; an $n$-ary predicate is defined as a function $p : A^n \rightarrow \{\top,\bot\}$; let $P^i$ be the set of predicates of arity $i$. A signature is a set of predicates $P = \{p_1, \dots, p_m\}$, respectively of arity $a_1, \dots, a_m$. For a given signature, the set of domain elements is defined as:
\[
\Sigma = P^{a_1} \times \dots \times P^{a_m}
\]

The satisfaction relation $\models$ is defined as $\models(d, \varphi) = \top$ if $\varphi$ evaluates to true when evaluating predicates as defined in $\sigma$, and $\bot$ otherwise.

In this context, an endomorphism will represent the change in the truth value for one input of one predicate. Let $p_k$ be a predicate of arity $i$, $(a_1, \dots, a_k) \in A^n$ be a $k$-tuple of elements of $A$, and $b \in \{\top,\bot\}$. The transformation $\tau_{p_k(a_1,\dots,a_k)\mapsto b}$ is defined as the predicate $p_k'$ such that:
\[
p_k'(x_1,\dots,x_k) = 
\begin{cases}
b & \mbox{if $x_1 = a_1$, \dots, $x_n = a_n$}\\
p_k(x_1,\dots,x_k) & \mbox{otherwise}
\end{cases}
\]

The set of transformations for $p_k$, noted $T_{p_k}$, is defined as:
\[
T_{p_k} \triangleq \bigcup_{(a_1,\dots,a_k)\in A^n} \left(\bigcup_{b \in \{\top,\bot\}} \{\tau_{p_k,(a_1,\dots,a_k),b}\}\right)
\]

\noindent The global set of transformations is then:
\[
T_\Sigma \triangleq \bigcup_{p \in P} T_{p}
\]

Similarly to first-order logic, one can check that two endomorphisms commute if they operate on a different predicate, or change the value of a different input on the same predicate.

\begin{example}
Let $A = \{0,1,2\}$, $\varphi$ be the first-order formula $\forall x : \exists y : x \neq y \wedge p(x,y)$, and the binary predicate $p$ defined as $\{(0,0), (0,1), (1,1)\}$. There are two prime repairs for restoring the truth of $\varphi$: $T_1=\{\tau_{p(2,0)\mapsto\top}\}$, $T_2=\{\tau_{p(2,1)\mapsto\top}\}$. This corresponds to the intuition that value 2 is missing at least one ``partner'' in $p$, and that either 0 or 1 could fit that purpose.
\end{example}

\begin{example}\label{ex:graph}
Let $A = [1,5]$ be a set of graph vertices, $p$ a binary predicate encoding the adjacency relationship of graph edges, and $q_1$,$q_2$,$q_3$ a set of unary predicates such that $q_i(x)$ holds if and only if vertex $x$ has colour $i$. Suppose predicates $p$ and $q$ are defined according to the graphical representation shown in Figure \ref{fig:graph-original}.

\begin{figure}
\begin{center}
\subfloat[Original graph]{\includegraphics[width=1.5in]{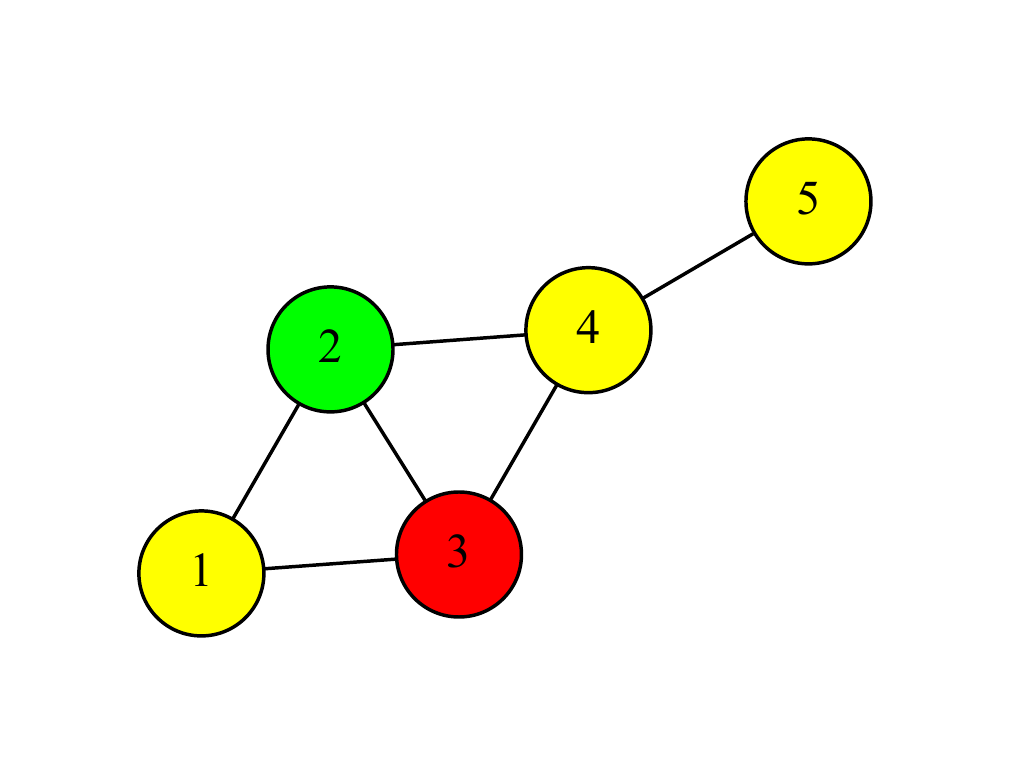}\label{fig:graph-original}}
\subfloat[After applying $T_1$]{\includegraphics[width=1.5in]{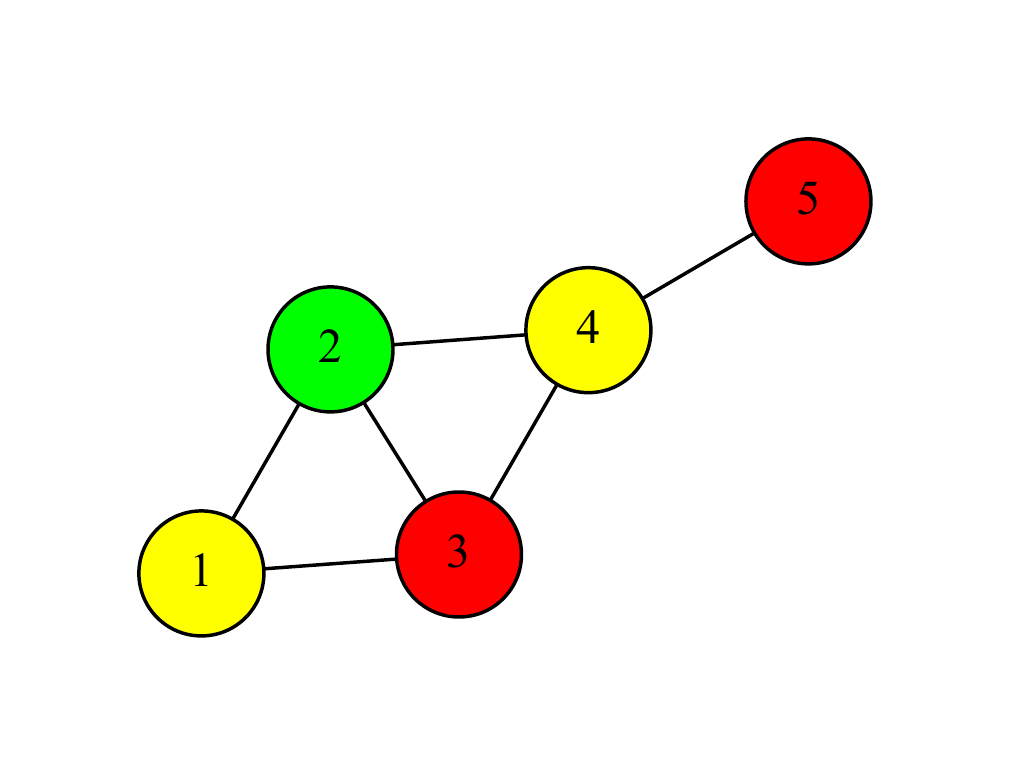}\label{fig:graph-t1}}
\subfloat[After applying $T_2$]{\includegraphics[width=1.5in]{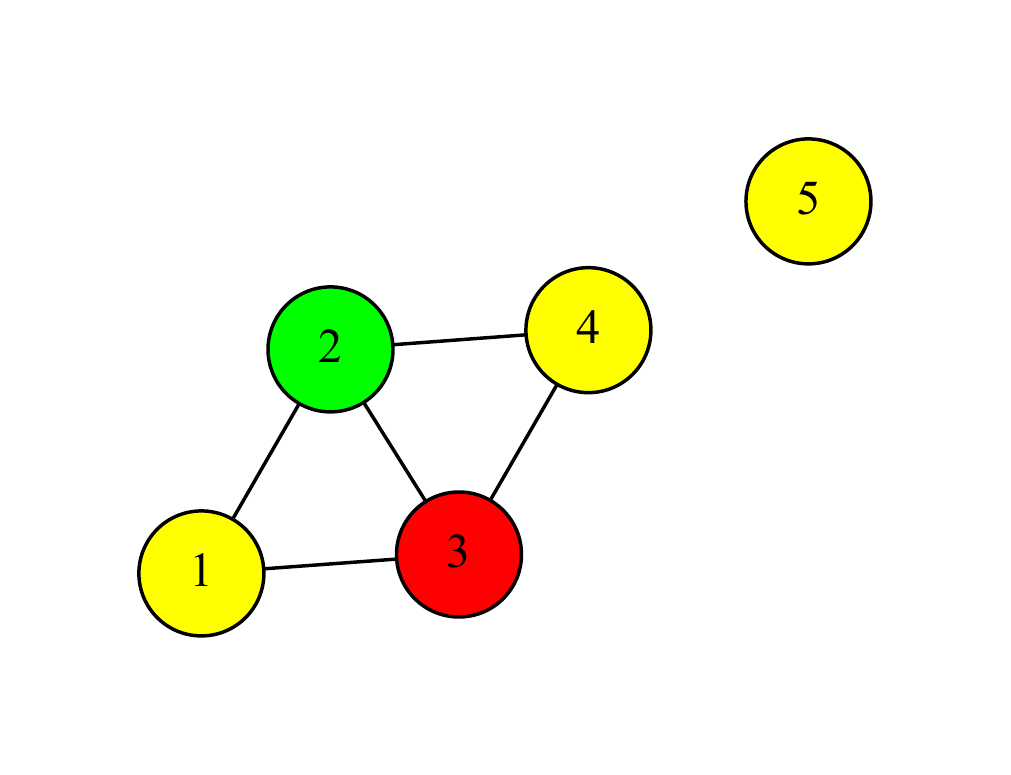}\label{fig:graph-t3}}
\\
\subfloat[After applying $T_3$]{\includegraphics[width=1.5in]{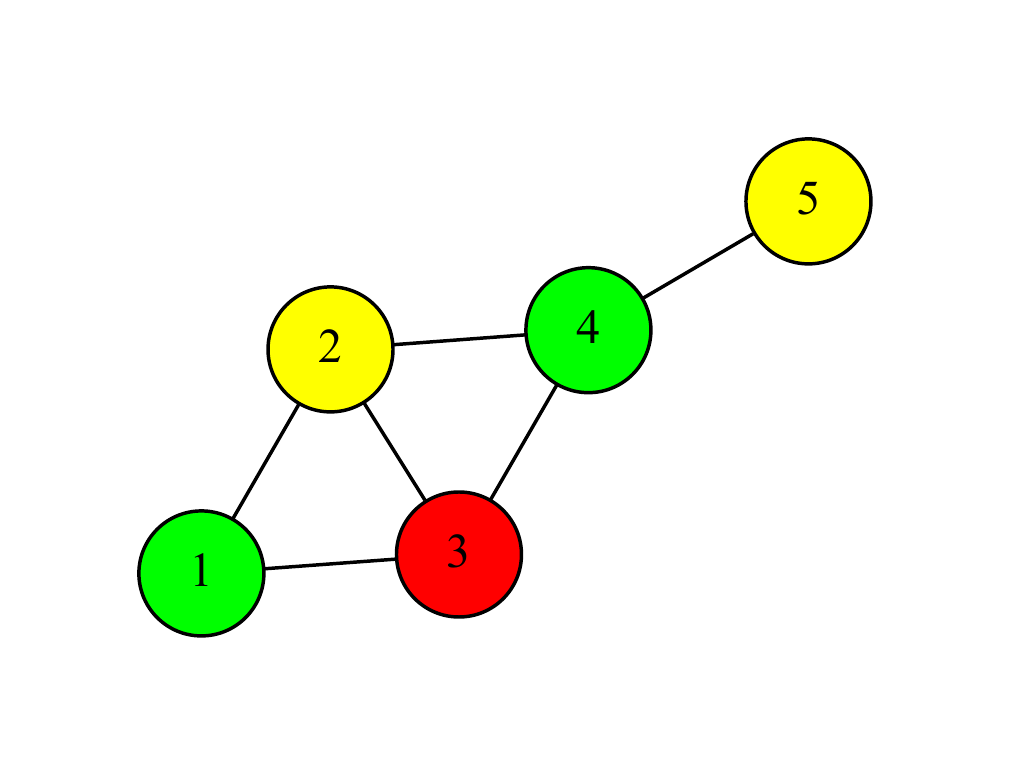}\label{fig:graph-t4}}
\subfloat[After applying $T_4$]{\includegraphics[width=1.5in]{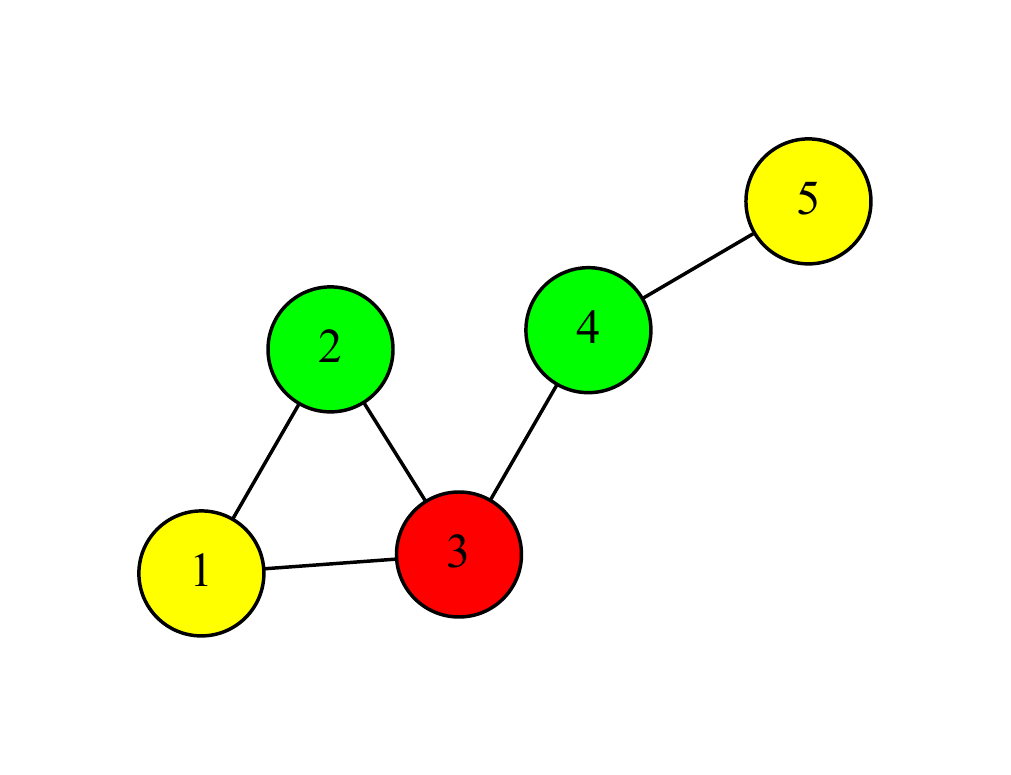}\label{fig:graph-t5}}
\caption{A few possible repairs for a faulty graph colouring}
\end{center}
\end{figure}

A solution to the graph colouring problem can be represented by three first-order expressions:

\begin{displaymath}
\varphi_1 \triangleq \forall x : (q_1(x) \wedge \neg q_2(x) \wedge \neg q_3(x)) 
\vee (\neg q_1(x) \wedge q_2(x) \wedge \neg q_3(x)) \vee (\neg q_1(x) \wedge \neg q_2(x) \wedge q_3(x))
\end{displaymath}
\begin{displaymath}
\varphi_2  \triangleq  \forall x : \forall y : p(x,y) \rightarrow p(y,x)
\end{displaymath}
\begin{displaymath}
\varphi_3  \triangleq  \forall x : \forall y : p(x,y) \rightarrow
((q_1(x) \rightarrow \neg q_1(y)) \wedge (q_2(x) \rightarrow \neg q_2(y)) \wedge (q_3(x) \rightarrow \neg q_3(y)))
\end{displaymath}

The first stipulates that every vertex has exactly one colour; the second indicates that the adjacency relation is symmetric, and the final expression stipulates that no adjacent vertices can have the same colour. One can see that the original graph does not satisfy $\varphi_1 \wedge \varphi_2 \wedge \varphi_3$. There exist multiple prime repairs, a few of which are shown here:
\begin{eqnarray*}
T_1 & = & \{\tau_{q_1(5)\mapsto\bot},\tau_{q_2(5)\mapsto\top}\} \\
T_2 & = & \{\tau_{p(4,5)\mapsto\bot},\tau_{p(5,4)\mapsto\bot}\} \\
T_3 & = &  \{\tau_{q_1(1)\mapsto\bot},\tau_{q_3(1)\mapsto\top},\tau_{q_1(4)\mapsto\bot},\tau_{q_3(4)\mapsto\top}\} \\
T_4 & = & \{\tau_{p(2,4)\mapsto\bot},\tau_{p(4,2)\mapsto\bot},\tau_{q_1(4)\mapsto\bot},\tau_{q_3(4)\mapsto\top}\}
\end{eqnarray*}

Repair $T_1$ fixes the graph by changing the colour of vertex 5 to red. Note that this necessitates not only setting $q_2(5)$ to $\top$, but also $q_1(5)$ to $\bot$; otherwise the resulting structure would violate $\varphi_1$. Another repair (not shown), changes vertex 5 to green. Repair $T_3$ rather alters the adjacency relation and cuts vertex 5 from the rest of the graph, so that the colour conflict is resolved.

These correspond to the ``intuitive'' ways of fixing the graph colouring. However, there exist multiple other prime repairs that fulfill the definition. For example, transformation $T_4$ exchanges the colours of vertices 1, 2 and 4. Note that this is indeed a prime repair, in that no subset of these endomorphisms restore satisfiability of the original formula. In the same way, $T_5$ cuts the edge between vertices 2 and 4, and turns 4 to green. In total, there are 17 distinct prime repairs in this particular example.
\end{example}

Again, it should be noted that without additional context, none of these repairs is a more likely explanation of the falsehood of $\varphi_1 \wedge \varphi_2 \wedge \varphi_3$ on the original graph.

\subsubsection{Extended First-Order Logic}

The previous example shows the need to extend the semantics of first-order logic to arbitrary functions instead of strictly Boolean predicates. This can easily be done as follows. Let $A_1, \dots, A_n$ and $B$ be finite sets. We will denote by $F^{A_1,\dots A_n\rightarrow B}$ the set of all functions $(\prod_i A_i) \rightarrow B$. 
A signature is a tuple of the form:
\[
\langle (A_{1,1}, \dots A_{1,n_1}) \rightarrow B_1, \dots, (A_{m,1}, \dots A_{m,n_m}) \rightarrow B_m\rangle
\]
such that $f_i$ is a function of arity $n_i$ with domain $A_{1,1}, \dots A_{1,n_i}$ and image $B_i$. Predicate logic is the special case where $B_1 = \dots = B_{n_m} = \{\top,\bot\}$, in which case the image can be omitted, and where the $A_{i,j}$ are all the same, so that only the arity needs to be known. If $f$ is a function $A \rightarrow B$ and $x$ designates an element of $A$, we shall write $x.f$ to denote $f(x)$, thus allowing some form of ``object'' notation for functions.

In this setting, first-order quantifiers need to precise over which of the $A_{i,j}$ they apply, so that expressions become of the form $\forall x \in A_{i,j} : \varphi$ and $\exists \in A_{i,j} : \varphi$. Ground terms can now compare values of two function terms, using any appropriate binary operator. 
Endomorphisms are still defined in the same way as for classical first-order logic, with the provision that they refer to appropriate values in the domain and image of the function subject to the change.

It should be noted that this extended formalism does not add any expressiveness to first-order logic if all sets are kept finite. It shall, however, simplify the expression of many properties.

\begin{example}\label{ex:web}
Equipped with this modified formalism, we are ready to consider repairs in web layout properties. Let $E$ be a set of page elements, $P$ be a set of pixel values, and $C$ be a set of CSS colours. Over these three sets, let us define the functions $E \rightarrow P$ called \verb+left+, \verb+right+, \verb+top+, and \verb+bottom+, corresponding to the $x$ and $y$ coordinates of the top-left and bottom-right corner of an element, respectively.
%
Additionally, we define a set $S$ of CSS \emph{selectors}; the evaluation of a CSS selector over a document can be formalized as a function $\$: S \rightarrow 2^E$ which, for a given filter expression, returns the subset of $E$ matching the selector. 

Endomorphisms can be defined for each of these functions, and shall be written using the notation introduced earlier. For example, $\tau_{\mbox{\scriptsize width}(e) \mapsto k}$ corresponds to the endomorphism setting the value of function \texttt{width} for element $e \in E$ to $k$, and leaving everything else as is.

One can then express the property that all items within a list with ID ``menu'' should be left-aligned as the following first-order expression:
\[
\forall x \in \$(\mbox{\tt \#menu li}) : \forall y \in \$(\mbox{\tt \#menu li}) : x.\mbox{\tt left} = y.\mbox{\tt left}
\]

\noindent
Note that this expression corresponds directly to the first-order translation of the \appname{} expression shown in Section \ref{sec:layout}.

\begin{figure}
\centering
\subfloat[]{\includegraphics[height=0.75in]{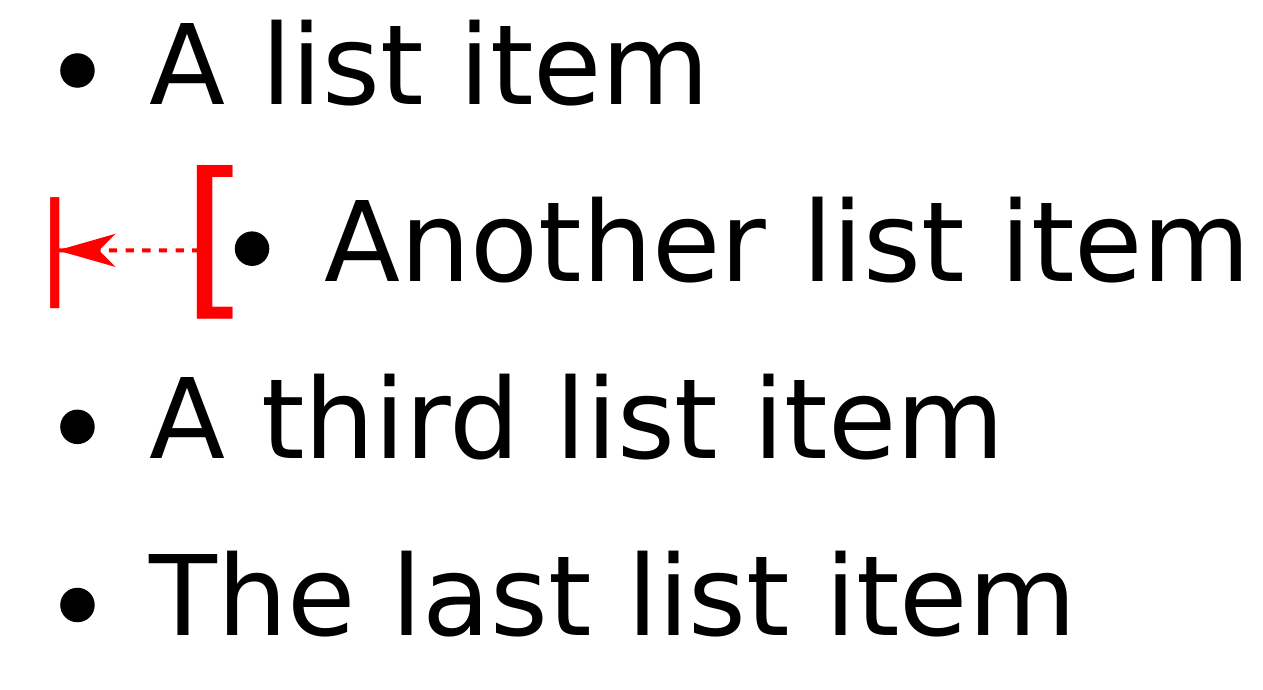}\label{fig:list-corrections-a}}~~
\subfloat[]{\includegraphics[height=0.75in]{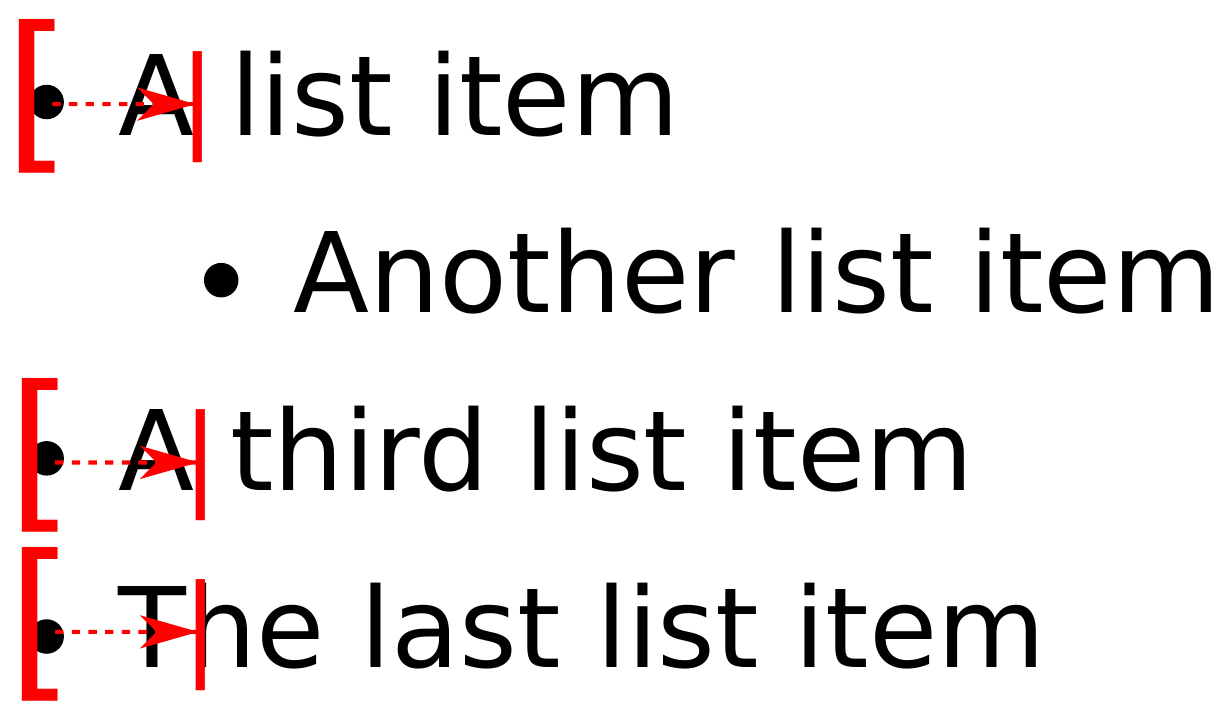}\label{fig:list-corrections-b}}~~
\subfloat[]{\includegraphics[height=0.75in]{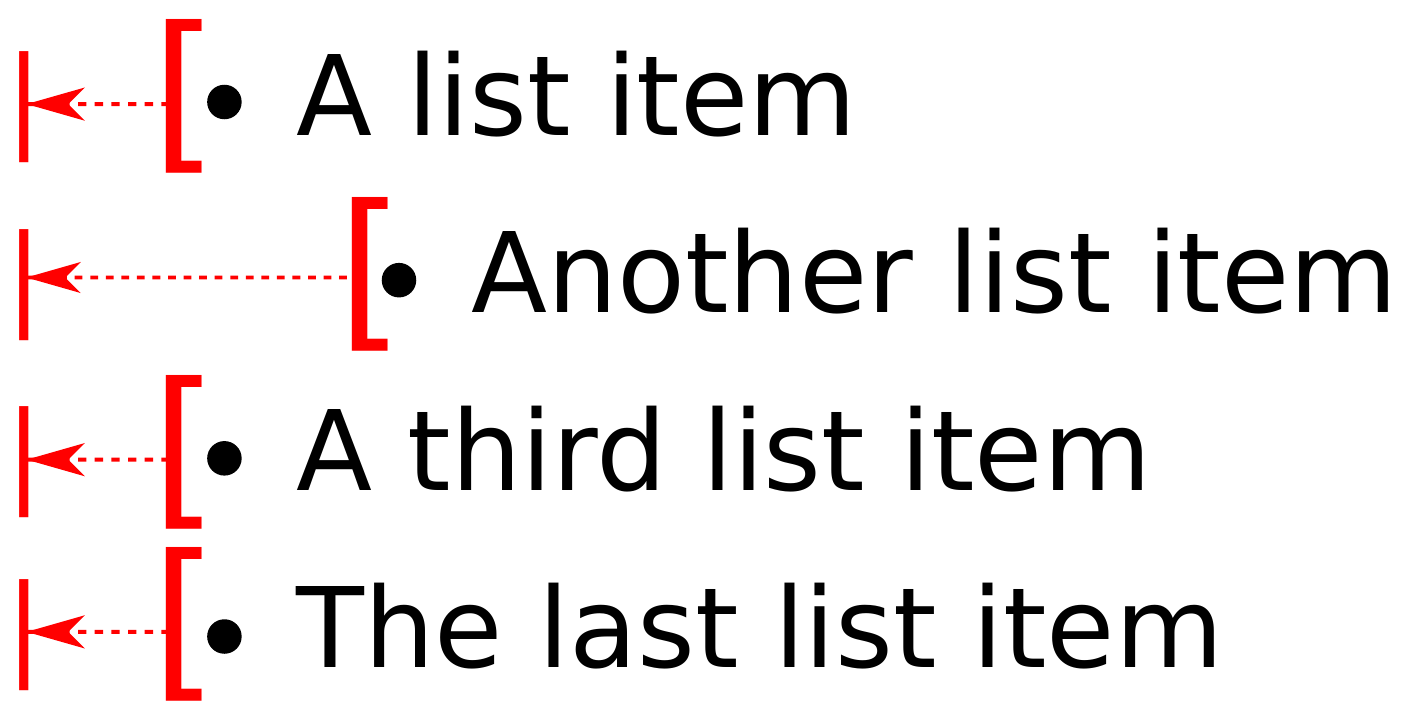}\label{fig:list-corrections-c}}
\caption{Three repairs for the web example}
\label{fig:list-corrections}
\end{figure}

Finding the prime repairs for that expression and the page fragment shown in Figure \ref{fig:list-wrong-a} produces a number of solutions, three of which are shown in Figure \ref{fig:list-corrections}. The first two are fairly intuitive. Figure \ref{fig:list-corrections-a} fixes the page by moving the lone misaligned list item in line with the others, while Figure \ref{fig:list-corrections-b} does the opposite, and aligns the three leftmost list items to the second one. Figure \ref{fig:list-corrections-c} gives an example of one of the many remaining solutions; in this case, all list items are moved to a new, common $x$ position, which turns out to be a coordinate that no element had in the original page.
\end{example}

This last example provides a graphical illustration of the difference between the original concept of witness, and that of repair. While a witness in this case highlights a randomly chosen pair of misaligned elements (as was shown in Figure \ref{fig:list-wrong-b}), a repair picks specific elements and, in addition, describes what should be done with them to fix the violation of the property. This is arguably more telling to a user, and constitutes in our view one of the key advantages of this technique.


%
%


\section{Computing Repairs}\label{sec:discussion} 

The basic concept of repair introduced in the previous section lends itself to a few discussion points. In particular, the number of possible prime repairs is potentially high, and the task of generating these repairs can therefore prove computationally intensive.

\subsection{Basic Algorithm and Complexity}

Algorithm \ref{alg:algo} shows an algorithm for iterating over all possible repairs of a structure. The algorithm simply enumerates all possible transformations $T \in 2^{T_\Sigma}$. It first checks whether $T$ is well defined (i.e.\, that any pair of endomorphisms commutes), and whether any previously generated repair (stored in set $T_S$) is a subset of the current one. It finally checks whether applying that transformation fixes the original structure. It skips to the next candidate transformation, should any of these three situations occur. Otherwise, the algorithm adds this transformation to its set, and returns it as its next element.

\begin{algorithm}
\caption{Generic algorithm for iterating over prime repairs}
\label{alg:algo}
\begin{algorithmic}[0]
\Procedure{ComputeRepairs}{$\varphi,\sigma,2^{T_\Sigma}$}
\State $T_S = \emptyset$
\ForAll{$T \in 2^{T_\Sigma}$}\Comment{Enumerated by increasing cardinality}
  \If{$\neg$\textsc{WellDefined}$(T)$}
  \State \textbf{skip}
  \EndIf
  \If{\textsc{Subsumed}$(T,T_S)$}
  \State \textbf{skip}
  \EndIf
  \If{$T(\sigma') \not\models \varphi$}
  \State \textbf{skip}
  \EndIf
  \State $T_S \gets T_S \cup \{T\}$
  \State \textbf{yield} $T$
\EndFor
\EndProcedure
\end{algorithmic}
\end{algorithm}

\begin{thm}
Algorithm \ref{alg:algo} is sound and complete.
\end{thm}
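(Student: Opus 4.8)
The plan is to prove the two directions separately. By \emph{soundness} I mean that every transformation $T$ yielded by \textsc{ComputeRepairs} is a prime repair of $\sigma$ with respect to $\varphi$; by \emph{completeness}, that every prime repair is eventually yielded by the procedure. I assume $T_\Sigma$ to be finite, so that the enumeration ``by increasing cardinality'' is well defined and the loop terminates; beyond that, the only structural fact I use about the enumeration order is that it visits every $T' \subsetneq T$ strictly before it visits $T$. Both directions rest on one loop invariant: after processing any prefix of the enumeration, the set $T_S$ equals exactly the set of prime repairs occurring in that prefix. I would establish this invariant by induction on the enumeration, interleaved with the proof of soundness, since the two statements refer to one another --- but only ``downward'', to transformations of strictly smaller cardinality, so the induction is well founded.

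For soundness, consider a $T$ that is yielded. The three \textbf{skip} guards that $T$ passed say precisely that $T$ is well defined (hence a transformation), that no member of $T_S$ is a subset of $T$, and that $T(\sigma) \models \varphi$, i.e.\ $T$ is a repair. The only remaining claim is that $T$ is \emph{prime}. Suppose not: some $T' \subsetneq T$ is also a repair. Among the repairs contained in $T'$ pick one, $T_0$, that is minimal for $\subseteq$; then $T_0$ is itself a prime repair, it is well defined (well-definedness is a condition on pairs of endomorphisms, hence inherited by subsets), and $|T_0| \le |T'| < |T|$, so $T_0$ was enumerated strictly before $T$. By the invariant, $T_0 \in T_S$ at the moment $T$ was processed; since $T_0 \subseteq T' \subsetneq T$, the \textsc{Subsumed} test would then have fired and $T$ would have been skipped --- a contradiction. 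Hence $T$ is prime.

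For completeness, let $T$ be an arbitrary prime repair. When the enumeration reaches $T$: it is well defined (every repair is a transformation) and $T(\sigma) \models \varphi$ (it is a repair), so neither the \textsc{WellDefined} guard nor the repair guard fires. If the \textsc{Subsumed} guard fired, there would be some $T' \in T_S$ with $T' \subseteq T$; by the invariant $T'$ is a repair, and primality of $T$ forbids $T' \subsetneq T$, leaving $T' = T$ --- but then $T$ had already been yielded at the (unique) earlier step that processed it. In either case $T$ is yielded, which is what completeness asks.

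It remains to discharge the invariant, by induction on the enumeration. The base case is immediate: before any step, $T_S = \emptyset$, matching the empty prefix. For the inductive step, suppose the claim holds for the prefix ending just before $T$ and consider processing $T$. If $T$ fails the \textsc{WellDefined} or the repair guard, $T$ is not a repair and $T_S$ is unchanged, as required. If $T$ fails the \textsc{Subsumed} guard, then some $T' \in T_S$ satisfies $T' \subseteq T$; $T' = T$ is impossible since $T$ is enumerated for the first time, so $T' \subsetneq T$, and by the induction hypothesis $T'$ is a repair, so $T$ is not prime and is correctly kept out of $T_S$. Finally, if $T$ is added to $T_S$, the soundness argument above shows $T$ is prime; note that that argument invokes the invariant only for transformations strictly smaller than $T$, hence only for the prefix covered by the induction hypothesis. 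In every case $T_S$ still equals the set of prime repairs enumerated so far, completing the induction and, with it, both halves of the theorem. The one point requiring care --- and the place where a careless write-up would become circular --- is exactly this interleaving of soundness with the invariant; making the induction explicitly on cardinality is what resolves it. A closing remark on complexity (in the worst case the loop examines all $2^{|T_\Sigma|}$ transformations, and \textsc{Subsumed} is linear in $|T_S|$) would naturally follow.
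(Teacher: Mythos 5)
Your proof is correct and follows essentially the same route as the paper's: soundness from the three guards plus the observation that any strictly smaller repair would force the \textsc{Subsumed} test to fire, and completeness from the exhaustive enumeration of $2^{T_\Sigma}$ by increasing cardinality. Your write-up is in fact tighter than the paper's --- the paper asserts that $T_S$ contains \emph{all} repairs of smaller cardinality, when it only contains the prime ones; your step of descending to a $\subseteq$-minimal sub-repair $T_0$ of the hypothetical smaller repair, combined with the explicit loop invariant, is precisely what closes that small gap.
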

\begin{proof}
Let $T$ be a transformation output by the algorithm. By construction, $T$ is a repair, since is well defined and fixes the truth value of $\sigma$ on $\varphi$. Moreover, at the moment $T$ is output, it is such that none of the elements of $T_S$ are a subset of $T$. Since $T_S$ contains all repairs of cardinality smaller than $T$, and that, by construction, all transformations of similar cardinality cannot be subsets of each other, it follows that $T$ is not subsumed by any existing repair, and is hence prime. This proves the soundness of the algorithm.

The fact that all such prime repairs are eventually enumerated is guaranteed by the fact that all subsets of $T_\Sigma$ are generated at some point, thereby proving completeness.
\end{proof}

This algorithm has been implemented in Java and is publicly available\footnote{\url{https://bitbucket.org/sylvainhalle/fault-finder}}. Owing to its simplicity and its genericity, the implementation of expressions, structures and repair iteration amounts to a mere 325 lines of code. The enumeration of repairs is exposed to the user in the form of a classical Java \texttt{Iterator} class, which can be used through the traditional \texttt{hasNext()} and \texttt{next()} methods to pass through the entire set of prime repairs, in increasing order of cardinality. Domain-specific classes defining propositional and first-order logic constructs are made of roughly 500 additional lines of code.

It is easy to see that the running time of this algorithm is exponential in the size of $T_\Sigma$, which itself may be exponential in some other factor. 
In first-order logic, if $a_1, \dots, a_n$ is the respective arity of each predicate in the signature, the number of endomorphisms is $\sum_i 2|A|^{a_i}$ for a given domain $A$.

Despite this, it is possible to show that this algorithm is limited by a theoretical lower bound. A set of endomorphisms $T_\Sigma$ is said to be \emph{complete} if for every $\sigma, \sigma' \in \Sigma$, there exists a well-defined transformation $T \subseteq T_\Sigma$ such that $T(\sigma) = \sigma'$. 

\begin{thm}
Given a set of structures $\Sigma$, a set of language expressions $\Phi$ and a complete set of transformations $T_\Sigma$, the problem of computing prime repairs is at least as hard as the satisfiability problem for $\Phi$.
\end{thm}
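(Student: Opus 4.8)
The plan is to exhibit a (Turing) reduction from the satisfiability problem for $\Phi$ --- i.e.\ deciding, given $\varphi \in \Phi$, whether there exists some $\sigma \in \Sigma$ with $\sigma \models \varphi$ --- to the problem of computing prime repairs. First I would fix an arbitrary structure $\sigma_0 \in \Sigma$ (assuming $\Sigma \neq \emptyset$, which is the only degenerate case, and one in which satisfiability is trivially false). I would then evaluate $\models(\sigma_0,\varphi)$. If $\sigma_0 \models \varphi$, then $\varphi$ is satisfiable and we stop. Otherwise $\sigma_0 \not\models \varphi$, so the notion of repair applies to the pair $(\sigma_0,\varphi)$, and I would invoke the prime-repair procedure (for instance Algorithm~\ref{alg:algo}) on $(\varphi,\sigma_0,2^{T_\Sigma})$, reporting `satisfiable' if and only if it yields at least one repair.

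\textbf{Key steps.} The `satisfiable' direction is immediate from the definitions: any repair $T$ satisfies $T(\sigma_0) \models \varphi$, so $T(\sigma_0)$ is an explicit model of $\varphi$. The converse direction is where completeness of $T_\Sigma$ is used. Suppose $\varphi$ is satisfiable while $\sigma_0 \not\models \varphi$, and pick $\sigma^\ast \models \varphi$. By completeness there is a well-defined transformation $T \subseteq T_\Sigma$ with $T(\sigma_0) = \sigma^\ast$, and this $T$ is therefore a repair of $(\sigma_0,\varphi)$. It then remains to pass from `some repair exists' to `some prime repair exists': since $T = \{\tau_1,\dots,\tau_n\}$ is finite, the family of subsets of $T$ that are themselves (well-defined) repairs is finite and nonempty, hence has a $\subseteq$-minimal member $T^\ast$; because every subset of a well-defined set is well-defined, $T^\ast$ is a well-defined repair with no proper repair subset, i.e.\ a prime repair. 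Consequently the procedure outputs at least one repair and the reduction answers `satisfiable' correctly. Contrapositively, if the procedure returns nothing then no repair exists at all, and by the argument just given no $\sigma^\ast$ can satisfy $\varphi$, so the answer `unsatisfiable' is also correct. Finally, I would observe that the reduction is cheap --- it adds only the choice of $\sigma_0$ and a single evaluation of $\models$ --- so any algorithm computing prime repairs immediately yields one deciding satisfiability for $\Phi$, which is exactly the claimed hardness statement.

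\textbf{Main obstacle.} The crux is not technical but conceptual: justifying the `unsatisfiable' branch. The repair procedure only ever inspects structures of the form $T(\sigma_0)$, so a priori a model of $\varphi$ could lie outside its reach, and the absence of repairs would say nothing. Completeness of $T_\Sigma$ is precisely the hypothesis that closes this gap, guaranteeing that every structure --- in particular every potential model of $\varphi$ --- is reachable from $\sigma_0$ by a well-defined transformation. The only subsidiary point requiring care is the finiteness argument that upgrades `a repair exists' to `a prime repair exists', together with the closure of well-definedness under taking subsets; both are routine once stated.
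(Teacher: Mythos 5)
Your proof is correct and follows essentially the same route as the paper's: a reduction from satisfiability to prime-repair computation, using completeness of $T_\Sigma$ to guarantee that some model of $\varphi$ is reachable from an arbitrarily chosen starting structure. You are in fact slightly more careful than the paper, which takes ``the smallest $T$ with $T(\sigma')=\sigma$'' and declares it prime (overlooking that a proper subset might still be a repair via a \emph{different} model), whereas you correctly minimize over the family of repair subsets and also handle the degenerate case where the chosen structure already satisfies $\varphi$.
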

\begin{proof}
Let $\varphi \in \Phi$ be some language expression. If $\varphi$ is satisfiable, then there exists some structure $\sigma \in \Sigma$ such that $\sigma \models \varphi$. Take an arbitrary structure $\sigma' \in \Sigma$. Since $T_\Sigma$ is complete, there exists at least one transformation $T \subseteq T_\Sigma$ such that $T(\sigma') = \sigma$. Take the smallest such set; by definition, it is a prime repair and will eventually be enumerated by Algorithm \ref{alg:algo}. Since the algorithm is sound and complete, on the contrary, no repair will be found if $\varphi$ is not satisfiable.
\end{proof}

\subsection{Reducing Number of Candidate Solutions}

These basic complexity results warrant a discussion about the reduction in the number of potential repairs that need to be explored.

\subsubsection{Removing Endomorphisms}

The number of potential transformations can first be reduced by removing endomorphisms that are known to be impossible, based on the context. For example, suppose that the propositional symbols $a$ and $b$ in Example \ref{ex:prop2} correspond to the assertions ``the client pays for an item'' and ``the client is shipped the item'', respectively. One could assume that a valuation where $a$ is true cannot be modified by making it false; this would correspond to the fact that an action done by some actor cannot be undone. In such a context, only endomorphisms setting false variables to true would be considered.

In the case of graphs, as in Example \ref{ex:graph}, one could impose restrictions on what changes are allowed to it; for example, one could say that existing edges must remain unchanged, or that only specific vertices may be coloured differently. This, again, has for effect of preferring some transformations over others, and globally reduces the number of available repairs.

\subsubsection{Transformations in Bulk}

The granularity of available endomorphisms can also be changed. In the case of the graph colouring example, it is obvious that no repair will ever consist of a single endomorphism $\{\tau_{q_i(x)\mapsto\top}\}$. The reason is that expression $\varphi_1$ requires that each vertex be of exactly one colour; assigning some $q_i$ to $\top$ for a vertex entails that the remaining $q_j$ for $j\neq i$ be set to $\bot$. One can hence define a new set of transformations appropriate for the context, representing \emph{colour changes}:
\[
T_C = \bigcup_{x \in A} \bigcup_{\substack{i\in[1,3]\\ j \neq i\\ k\neq j\neq i}} \{\{\tau_{q_i(x)\mapsto\top}, \tau_{q_j(x)\mapsto\bot}, \tau_{q_k(x)\mapsto\bot}\}\}
\]

Similarly, as the adjacency relation is symmetric, setting $p(x,y)$ to $\top$ (resp.\ $\bot$) cannot be done without also setting $p(y,x)$ to $\top$ (resp.\ $\bot$). Instead of considering individual changes to single inputs of $p$, one can define a set of \emph{edge changes}:
\[
T_E = \bigcup_{x \in A} \bigcup_{y \in A} \bigcup_{b \in \{\top,\bot\}} \{\{\tau_{p(x,y)\mapsto b}, \tau_{p(y,x)\mapsto b}\}\}
\]
One could then use $T_C \cup T_E$ as the set of transformations, instead of $T_\Sigma$. While this makes no change in theory on the actual solutions, the fact that $T_C \cup T_E$ is smaller in size than $T_\Sigma$ has a positive effect on the performance of an enumeration algorithm in practice.

The same can be said of the endomorphisms of Example \ref{ex:web}. Rather than consider all individual changes of $(x,y)$ coordinates of all four corners of every element, one could define subsets corresponding to more intuitive modifications; for example, the set of \emph{horizontal displacements} could be defined as:
\[
T_H = \bigcup_{e \in E} \bigcup_{p \in P} \{\{\tau_{\mbox{\scriptsize left}}(e) \mapsto p, \tau_{\mbox{\scriptsize right}}(e) \mapsto (\tau_{\mbox{\scriptsize right}}(e) - p)\}\}
\]

One can then restrict the search for repairs to those that are made only of (horizontal or vertical) displacements, or (horizontal or vertical) \emph{resizings} of elements, etc.


\section{Related Work} 

The concept of repair calculation is under construction, and its ties to related work still need to be established. As we have seen in the previous section, finding repairs relates to the concept of satisfiability solving (SAT), and more precisely the  problem of \emph{incremental} SAT \cite{DBLP:conf/sat/NadelR12}. Traditional SAT solvers are required to find a single model of an expression is such model exists. In incremental SAT, a solver finds a first model of an expression, but can also be repeatedly asked to provide additional models. When a set of transformations is complete, iterating over models amounts to iterating over repairs.

The use of model finding has also been studied in the field of network configuration management. An early version of the concept of witness was suggested by one of the authors \cite{DBLP:conf/ifip6/HalleWVC06} in the context of self-configuration. Couch \textit{et al.} proposed the notion of \emph{convergent} actions for repairing the configuration of a network \cite{DBLP:conf/dsom/CouchS03} (similar to our notion of well-definedness), while Narain \cite{DBLP:journals/jnsm/NarainLMK08} suggested six uses for a procedure $P(\varphi,x)$ that finds values for $x$ that satisfy some configuration constraint $\varphi$. Two of them deserve special attention:

\begin{itemize}
\item Configuration Error Fixing: if a configuration does not fulfill some specification $\varphi$, compute $P(\varphi, x )$ and take the solution $x$ closest to the current configuration.

\item Requirement Strengthening: to reconfigure a network with a set of additional constraints $\psi$, compute $P(\varphi \wedge \psi,x)$ (that is, the constraints of $\varphi$ and $\psi$ together) and take the result $x$.
\end{itemize}

In both cases, Narain suggested the use of a satisfiability solver for generating the target configuration; however, a SAT solver in general does not give control over what model is returned. The solution ``closest'' to the current configuration is better captured by the concept of prime repair, which corresponds exactly to that concept.


\section{Conclusion} 

The proof-of-concept prototype of \appname{} has shown promising results in its ability to easily express conditions for layout-based bugs in web applications, and efficiently detecting them in sample pages from more than 35 real-world applications. However, its ability to return a useful explanation for the violation of a property on a given web document is limited. This paper has introduced a definition of the concept of \emph{repair}, whose calculation provides more precise information about the changes required to a structure in order to satisfy a given specification.

The study of repairs and their computation is part of ongoing work, and many problems are still open. For example, an efficient computation of repairs relies on the  deletion of as many candidate transformations as possible; therefore, techniques to \emph{easily} identify endomorphisms that can never be part of a solution could be sought after. Similarly, we are planning to study techniques that could generate the set of repairs directly from the specification and the faulty structure, rather than using the crude generate-and-test algorithm presented in this paper.
In spite of this, early results on a number of examples show that using repairs as a form of fault localization is promising, and in particular provides results that correspond to intuition in many cases.

\bibliographystyle{abbrv}
\bibliography{paper}



\end{document}